\documentclass[conference]{IEEEtran}

\usepackage{cite}
\usepackage{amsmath,amssymb,amsthm}
\usepackage{pgfplots}
\pgfplotsset{compat=1.18}

\makeatletter
\let\MYcaption\@makecaption
\makeatother
\usepackage[font=footnotesize]{subcaption}
\makeatletter
\let\@makecaption\MYcaption
\makeatother

\pgfplotsset{
    every axis/.append style={
        label style={font=\footnotesize},
        tick label style={font=\footnotesize},
        legend style={font=\footnotesize}
    }
}

\usepackage[pdftex,bookmarks,colorlinks,breaklinks, linkcolor=blue, citecolor=blue,
            urlcolor=blue]{hyperref}

\newtheorem{theorem}{Theorem}
\newtheorem{lemma}{Lemma}
\newtheorem{remark}{Remark}

\newcommand{\reals}{\mathbb{R}}
\newcommand{\complexes}{\mathbb{C}}
\newcommand{\conj}{^*}
\newcommand{\abs}[1]{\left|#1\right|}
\newcommand{\CN}[2]{\mathcal{CN}\left( #1,#2 \right)}
\newcommand{\vect}[1]{\textbf{#1}}
\newcommand{\EX}[2][]{\mathbb{E}_{#1}\left[#2\right]}

\begin{document}

\title{Over-the-Air Computation with Power Amplifier Nonlinearity
}

\author{
\IEEEauthorblockN{
David Nordlund\IEEEauthorrefmark{1},
Ayça Özçelikkale\IEEEauthorrefmark{2},
and Zheng Chen\IEEEauthorrefmark{1}
}
\IEEEauthorblockA{\IEEEauthorrefmark{1}
Department of Electrical Engineering, Link\"oping University, Sweden\\
\{david.nordlund, zheng.chen\}@liu.se
}
\IEEEauthorblockA{\IEEEauthorrefmark{2}
Department of Electrical Engineering, Uppsala University, Sweden\\
ayca.ozcelikkale@angstrom.uu.se
}
}

\maketitle

\begin{abstract}
Over-the-air computation (AirComp) enables efficient wireless data aggregation by exploiting the superposition of signals transmitted by multiple source nodes, but its performance is sensitive to hardware impairments such as power amplifier (PA) nonlinearity. 
In this paper, we develop an analytical framework for coherent AirComp with PA nonlinearities and derive a closed-form expression for the receiver mean-squared error (MSE) under a memoryless polynomial PA model. Our numerical results show that power control optimized under the conventional linear-PA assumption can exhibit rapidly increasing MSE as the transmit power budget increases, whereas incorporating PA nonlinearity into the power control design prevents this degradation and maintains low MSE over the considered power range.
\end{abstract}

\begin{IEEEkeywords}
Over-the-air computation, power control, hardware impairments, nonlinearity
\end{IEEEkeywords}

\section{Introduction}
Driven by the demand for efficient data aggregation in massive Internet-of-Things (IoT) networks, over-the-air computation (AirComp), a joint communication-and-computation framework, has attracted significant attention \cite{otaZhu,ota-old-new,10092857, zhang2024fluid}.
By exploiting the superposition property of wireless channels, AirComp enables the direct computation of nomographic functions in the air \cite{nazer2007computation,otaGoldenbaum}. This allows data aggregation to be performed simultaneously with signal transmission, enabling efficient resource utilization that is largely independent of the number of source nodes.

Most existing studies on AirComp use coherent analog processing with linear scaling, which requires the source nodes to know the channel state information (CSI) for amplitude scaling and phase alignment. Different power control strategies have been explored, some ensuring unbiased estimation and others relaxing this constraint for better mean-squared error (MSE) performance.
For static channels with per-user average power constraints, the MSE-optimal strategy is known to take a partial channel inversion form: source nodes with sufficiently strong channels invert the channel exactly, while the remaining nodes transmit at full power \cite{caoOptimized}.
Related MSE-optimal power allocation has also been studied under alternative assumptions, including sum-power constraints \cite{zang2020over} and correlated source data \cite{wang2024optimal}.

However, practical AirComp systems are also affected by impairments such as phase noise \cite{dahl2024over} and imperfect CSI \cite{evgenidis2023over,chen2022over}, which can disrupt phase alignment and degrade channel inversion.
Beyond these effects, existing processing schemes typically assume perfectly linear signal amplification, which may not hold in practical transmitters. In particular, power amplifiers (PAs) exhibit nonlinear behavior when operating close to their power limits \cite{morgan2006generalized}, potentially causing significant estimation errors in AirComp systems where some nodes operate at or near saturation.
PA nonlinearity is commonly mitigated at the hardware level through digital predistortion (DPD), which linearizes the transmit chain prior to amplification \cite{morgan2006generalized}.
An alternative strategy that requires no additional transmitter hardware is power control at the source nodes, which is particularly attractive for resource- and battery-constrained IoT devices. 
In the context of over-the-air federated learning, peak-power constraints imposed by PA nonlinearity have been addressed via amplitude clipping and frequency-domain filtering \cite{bielefeld2025signal}.

Existing studies have thus addressed PA-related peak-power and clipping effects, but the impact of PA nonlinearity on AirComp power control design remains largely unexplored.
In this work, we introduce a general framework for coherent AirComp with PA nonlinearity, and derive a closed-form expression for the MSE at the receiver under a memoryless polynomial PA model.
Using this expression, our numerical results show that power control designed for linear PAs, although optimal under the ideal linear model, results in rapidly increasing MSE as the transmit power budget increases when PA nonlinearity is present. We further show that accounting for PA nonlinearity in the power control design, even without explicitly modeling hardware saturation, limits the MSE growth and substantially outperforms the linear-PA baseline at high power.
\section{Coherent Over-the-Air Computation with Power Amplifier Nonlinearity}
\label{sec:system_model}

We consider a system with $K$ source nodes indexed by $k \in \mathcal{K}= \{1,2, \ldots ,K\}$, transmitting data to a central server over a multiple-access channel with channel coefficients $(h_1, \ldots, h_K) = \vect{h} \in \complexes^K$. 
The source data are assumed to be normalized with distribution $(s_1, \ldots, s_K) = \vect{s} \sim \CN{\vect{0}}{\mathbf{I}_K}$.
The objective is for the server to estimate the sum of the distributed source data, $f(\vect{s})= \sum_{k\in \mathcal{K}} s_k \in \complexes$, via AirComp.
As is commonly done in the literature (see, e.g., \cite{du2023gradient, wang2024optimal, broadband ,caoOptimized}), CSI is assumed to be known, enabling coherent pre-processing at the source nodes. 
Specifically, each node pre-compensates for the channel phase shift so that the transmitted signals combine coherently at the receiver.

Accordingly, node $k$ wants to transmit the precoded symbol
\begin{equation}
    x_k = \phi(s_k)= \frac{h_k^*}{|h_k|} \sqrt{\rho_k}  s_k,
    \label{encoding}
\end{equation}
where $(\cdot)^*$ denotes the complex conjugate and $\rho_k \in \reals$ is the transmit power of node $k$, subject to average power constraints 
\begin{equation}
    \label{eq:power_constraint}
    \mathbb{E}[\abs{x_k}^2] = \rho_k \leq P
\end{equation}
at each source node. 

In practical systems, PAs introduce nonlinear distortions such that the actually transmitted symbols become $g(x_k)\in \complexes$. 
The server receives
\begin{equation}
    y= \sum_{k \in \mathcal{K}} h_k g(x_k) + w, 
\end{equation}
where $w\sim \CN{0}{\sigma^2}$ is additive noise. Based on $y$, the server constructs an estimate $\hat f$ of the target function.

Conventional AirComp processing schemes are typically derived under the assumption of perfectly linear PAs, i.e.,
\begin{equation}
  g(x_k)= g_\mathrm{L}(x_k) = x_k.  
\end{equation}
In this work, we show that such schemes can suffer substantial performance degradation in the presence of PA nonlinearity. To this end, we first review several common AirComp processing strategies under the ideal linear PA assumption. We then incorporate nonlinear PA distortions into these schemes and characterize the estimation performance with the MSE.

\subsection{Linear Power Amplifiers}
\label{sec:system_model_linear}
Under the linear PA assumption, the server receives 
\begin{equation}
    y= \sum_{k \in \mathcal{K}} h_k x_k +w = \sum_{k \in \mathcal{K}} \abs{h_k} \sqrt[]{\rho_k} s_k + w.
\end{equation}
After post-processing with $\psi(y) = \gamma y$, the estimated computation result is
\begin{equation}
\hat{f}=\gamma y= \gamma \sum_{k \in \mathcal{K}} \abs{h_k} \sqrt{\rho_k} s_k + \tilde{w},
\end{equation} 
where $\gamma \in \reals$ is a linear scaling factor and  $\tilde{w} \sim \CN{0}{\gamma^2 \sigma^2}$. 
The system performance is measured by the MSE, given by
\begin{align}
  e (\rho_1, \hdots, \rho_K)= \EX[w, \vect{s}]{\left\vert f-\hat f \right\vert^2}.
  \end{align}  

In the following, we outline several widely used choices of pre- and post-processing functions (i.e., transmit power control and receive scaling) in the AirComp literature. 

\subsubsection{Full Channel Inversion} 
\label{sec:proc_unbiased}
In most works, the source nodes employ channel inversion to achieve alignment of both the phase and amplitude of the transmitted signals \cite{nazer2007computation}. 
Using the encoding framework in \eqref{encoding}, this introduces the condition
\begin{equation}
\label{eq:unbiased_linear_condition}
	\gamma \abs{h_k} \sqrt{\rho_k} =1, \forall k \in \mathcal{K}.
\end{equation}
One way to satisfy this condition is to select the transmit power control and receive scaling factors as
\begin{align}
	\rho_k = \frac{\eta^2}{\abs{h_k}^2}, \hspace{0.3cm}
	\gamma = \frac{1}{\eta},
\end{align}
for some $\eta \in \reals$ tuned to satisfy the power constraints, i.e., 
\begin{equation}
\label{eq:eta_full_inversion}
	\eta \leq \sqrt{P}\min_{k \in \mathcal{K}}|h_k|.
\end{equation}
In this way, the pre-processing function is $\phi(s_k)=\eta s_k/h_k$, which inverts the channel under the power budget.
The target function estimate becomes
\begin{equation}\label{eq:fhat:fullchannelinversion}
	\hat{f}= f + \frac{1}{\eta} w,
\end{equation}
which is the desired sum plus an effective (zero-mean) noise term that is inversely proportional to the power scaling factor $\eta$. Note that this power control scheme achieves unbiased estimation since $\EX{\hat{f}}=f$.

\subsubsection{Partial Channel Inversion}
\label{sec:proc_partial}
With full channel inversion, 
source nodes with good channel conditions limit their transmit power so that their received signal amplitude matches that of the node with the worst channel. As the number of nodes in the system increases, $\eta$ decreases because the probability that at least one node experiences a particularly poor channel grows, which in turn reduces the effective signal-to-noise ratio (SNR). Therefore, it may be advantageous to allow a small bias in the function estimate to reduce the effective noise variance. In this spirit, several partial channel inversion approaches have been proposed in the literature.

Let $\mathcal{K}_h = \{ k :  |h_k|>h_\mathrm{th}\}$ denote the subset of source nodes with sufficiently good channel conditions, where $h_\mathrm{th}$ is a minimum channel quality threshold. 
We refer to $|h_k|>h_\mathrm{th}$ as the channel inversion condition, as the nodes satisfying this condition perform channel-inversion encoding (as outlined in Section~\ref{sec:proc_unbiased}) with power control coefficients:
\begin{equation}
\label{eq:partial_inverting}
	\rho_k = \frac{\eta^2}{\abs{h_k}^2} , \; \forall k \in \mathcal{K}_h,
\end{equation}
where $\eta \leq \sqrt{P}\min_{k\in \mathcal{K}_h}  \abs{h_k}$. 

The remaining source nodes transmit with fixed power using
\begin{equation}
\label{eq:partial_not_inverting}
	\rho_k = \delta P, \; \forall k \in \mathcal{K} \setminus \mathcal{K}_h,
\end{equation}
where $\delta \in [0,1]$ denotes the fraction of the available transmit power used for actual transmission. The receive scaling factor remains unchanged, i.e., 
\begin{equation}
\label{eq:partial_gamma}
    \gamma = \frac{1}{\eta}.
\end{equation}
\begin{remark}
\label{remark:general_processing_scheme}
This formulation unifies and extends existing variants of (partial) channel inversion schemes. In particular, $h_\mathrm{th} = 0$ corresponds to full channel inversion, while $\delta = 0$ reduces to the ``truncated channel inversion'' scheme, where source nodes with poor channel quality do not participate in the aggregation (e.g., \cite{broadband}). At the other extreme, $\delta = 1$ recovers the ``threshold-based power control'' scheme, in which such source nodes transmit with maximum power (e.g., \cite{caoOptimized}). The intermediate regime $0<\delta<1$ has, to the best of our knowledge, not been studied in the literature. As we show later, when accounting for PA nonlinearity, $\delta$ becomes an important design parameter, and its optimal value may lie strictly within this range.
\end{remark}

\subsection{Nonlinear Power Amplifiers}
\label{sec:system_model_nonlinear}
We now extend the coherent AirComp framework to account for nonlinear PA distortions.
A widely used model for nonlinear PAs is the memoryless polynomial model \cite{faulkner2002spectral,demir2026network, handel2020impact}
\begin{equation}
\label{eq:PA_model_general}
	g(x_k) = g_\mathrm{P}(x_k) = \sum_{l=0}^L \alpha_{2l+1} x_k \abs{x_k}^{2l},
\end{equation}
where $2L+1$ is the model order, and $\{ \alpha_{2l+1}\}_{l=0}^L \subset \complexes$ are the polynomial coefficients.

Substituting \eqref{encoding} into \eqref{eq:PA_model_general}, the transmitted signal at the $k$-th node can be expressed as
\begin{align}
\label{eq:gp}
    g_\mathrm{P}(x_k)
    &= \sum_{l=0}^{L} \alpha_{2l+1} \rho_k^{l+\frac12} \frac{h_k^*}{|h_k|} s_k \abs{s_k}^{2l}.
\end{align}
The server receives
\begin{align}
    y
    &= \sum_{k \in \mathcal{K}} \sum_{l=0}^{L} \alpha_{2l+1} \rho_k^{l+\frac12} \abs{h_k} s_k \abs{s_k}^{2l} + w,
\end{align}
and finally computes
\begin{align}
    \hat{f} 
    &= \sum_{k \in \mathcal{K}} \sum_{l=0}^{L} c_{k,l} s_k \abs{s_k}^{2l} + \tilde{w},
\end{align}
where
\begin{equation}
    \label{eq:MSE_c_coeff}
    c_{k,l} = \alpha_{2l+1} \gamma \abs{h_k} \rho_k^{l+\frac{1}{2}} \in \complexes.
\end{equation}

For a given PA model $X$, let $e_X(\cdot)$ denote the corresponding MSE, i.e., the MSE obtained when $g(\cdot)=g_X(\cdot)$.
The following holds for $e_\mathrm{P}(\cdot)$ and a given realization of the channel coefficients $\vect{h}$.

\begin{theorem}
\label{thm:MSE}
    Let $c_{k,l}$ be given by \eqref{eq:MSE_c_coeff}, and define
    \begin{align}
        d_{k,0} &= 1-c_{k,0}, \\
        d_{k,l} &= -c_{k,l}, \quad l \geq 1.
    \end{align}
    For source data distributed as $\vect{s} \sim \CN{\vect{0}}{\mathbf{I}_K}$, the MSE is
    \begin{align}
       e_\mathrm{P}(\rho_1, \hdots, \rho_K) = \gamma^2 \sigma^2 + \sum_{k \in \mathcal{K}} \sum_{l=0}^{L} \sum_{l^\prime=0}^{L} (l+l^\prime+1)! \, d_{k,l} \conj d_{k,l^\prime} \, . \label{eq:analytical_MSE}
    \end{align}  
\end{theorem}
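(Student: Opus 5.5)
The plan is to write the error as a sum of independent zero-mean terms, expand the squared magnitude, and evaluate the surviving moments of a circularly symmetric Gaussian. Since $f=\sum_k s_k$, we have
\begin{equation*}
f-\hat f = \sum_{k\in\mathcal{K}} \sum_{l=0}^{L} d_{k,l}\, s_k\abs{s_k}^{2l} - \tilde w = \sum_{k\in\mathcal{K}} Z_k - \tilde w,
\end{equation*}
where $Z_k = s_k\sum_{l} d_{k,l}\abs{s_k}^{2l}$. Here the coefficients $d_{k,l}$ are exactly those of the statement: the linear term $c_{k,0}s_k$ is absorbed into $d_{k,0}=1-c_{k,0}$. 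For a fixed channel realization, the variables $Z_1,\dots,Z_K$ and $\tilde w$ are mutually independent, and each has zero mean. For $\tilde w$ this is immediate. For $Z_k$ it follows from circular symmetry: $s_k \overset{d}{=} e^{j\theta}s_k$ for any $\theta$, while $\abs{s_k}$ is unchanged under this rotation, so $\EX{Z_k}=e^{j\theta}\EX{Z_k}$ for all $\theta$, which forces $\EX{Z_k}=0$. All cross terms in $\EX{\abs{\sum_k Z_k-\tilde w}^2}$ then vanish. The term $\EX{\abs{\tilde w}^2}=\gamma^2\sigma^2$ produces the first summand of \eqref{eq:analytical_MSE}, and the rest reduces to $\sum_k \EX{\abs{Z_k}^2}$.

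Next, expand each remaining term:
\begin{equation*}
\EX{\abs{Z_k}^2} = \sum_{l=0}^{L}\sum_{l'=0}^{L} d_{k,l}\, d_{k,l'}\conj\, \EX{\abs{s_k}^{2(l+l'+1)}}.
\end{equation*}
The $d$'s are deterministic because $\gamma$, $\rho_k$ and $h_k$ are fixed, so they come out of the expectation. The key computation is the moment of $s\sim\CN{0}{1}$. Its squared modulus $\abs{s}^2$ is exponentially distributed with unit mean, so $\EX{\abs{s}^{2m}}=\int_0^\infty t^m e^{-t}\,dt = m!$. Setting $m=l+l'+1$ gives the factor $(l+l'+1)!$, and summing over $k$ yields \eqref{eq:analytical_MSE}.

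No step is a genuine obstacle. The points that need care are mostly bookkeeping:
\begin{itemize}
\item $\tilde w=\gamma w$ is independent of $\vect{s}$.
\item The zero-mean argument must cover the higher-order terms $s_k\abs{s_k}^{2l}$, not just $s_k$. The rotation argument above handles this because it only uses the phase of $s_k$.
\item $\abs{s}^2\sim\mathrm{Exp}(1)$ must be verified. This follows because $\abs{s}^2$ is the sum of two independent $\mathcal{N}(0,1/2)$ squares, which is a scaled $\chi^2_2$ variable with mean $1$.
\end{itemize}
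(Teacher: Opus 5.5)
Your proof is correct and follows essentially the same route as the paper. You decompose $f-\hat f$ into independent zero-mean terms plus $\tilde w$, use circular symmetry and independence to eliminate the cross terms, and evaluate the diagonal terms via $\abs{s_k}^2\sim\mathrm{Exp}(1)$, which gives the factor $(l+l'+1)!$. The only differences are cosmetic: you group terms per node as $Z_k$ and show the zero mean by rotation invariance, whereas the paper works with the individual $v_{k,l}=s_k\abs{s_k}^{2l}$ and the polar form $r_k e^{j\theta_k}$.
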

\begin{proof}
   See Appendix~\ref{app:proof_th1}.
\end{proof}
Theorem~\ref{thm:MSE} holds for coherent AirComp under an arbitrary power allocation $\rho_k$ at the source nodes, i.e., it does not presume a specific power control scheme. We now specialize this result to the partial channel inversion scheme of Section~\ref{sec:proc_partial}, parametrized by $(\delta,\eta,h_\mathrm{th})$, to characterize how the MSE depends on these design parameters.

Recall that under this scheme, $\rho_k$ takes one of two forms depending on whether $k \in \mathcal{K}_h$, where $\mathcal{K}_h$ is determined by $h_\mathrm{th}$. Substituting \eqref{eq:partial_inverting}, \eqref{eq:partial_not_inverting}, and \eqref{eq:partial_gamma} into \eqref{eq:MSE_c_coeff} gives
\begin{equation}
\label{eq:c_coeff_partial}
c_{k,l} =
\begin{cases}
\alpha_{2l+1} \dfrac{\eta^{2l}}{|h_k|^{2l}}, & k \in \mathcal{K}_h, \\
\dfrac{\alpha_{2l+1} |h_k| (\delta P)^{l+\frac{1}{2}}}{\eta}, & k \in \mathcal{K}\setminus\mathcal{K}_h.
\end{cases}
\end{equation}
Substituting \eqref{eq:c_coeff_partial} into \eqref{eq:analytical_MSE} thus expresses the MSE as a function of $(\delta,\eta,h_\mathrm{th})$. We denote this MSE by $e_\mathrm{P}(\delta,\eta,h_\mathrm{th})$.

The choice of power control parameters $\eta$, $\delta$, and $h_\mathrm{th}$ jointly affects system performance by influencing both the effective noise and the nonlinear distortion.

We are interested in the power control parameters that minimize the MSE, i.e., in solving
\begin{align}
\label{eq:opt_nonlinear}
\min_{\delta \in [0,1], \eta \geq 0, h_\mathrm{th}} \quad & e_\mathrm{P}(\delta,\eta,h_\mathrm{th}) \\
\text{s.t.} \quad & \eta \leq \sqrt{P}\min_{k \in \mathcal{K}_h}{|h_k|}. \nonumber
\end{align}
This problem is, in general, nonconvex.
Our objective is to characterize the performance of PA-aware power control rather than to develop a globally optimal solution to this nonconvex problem. We therefore resort to numerical optimization.
Although $h_\mathrm{th}$ is a continuous variable, for fixed $(\eta,\delta)$, the MSE is piecewise constant with respect to $h_\mathrm{th}$ and changes only when $h_\mathrm{th}$ crosses the channel magnitude of a source node. Thus, only $K+1$ intervals need to be considered. For each candidate interval, the remaining optimization is two-dimensional in $(\eta,\delta)$ and bounded, making numerical optimization feasible.

Solving \eqref{eq:opt_nonlinear} requires knowledge of the polynomial coefficients. In practice, these coefficients can be estimated using well-established techniques from the PA modeling literature, see, e.g., \cite{morgan2006generalized}.

\begin{figure}%
    \centering
    \begin{tikzpicture}
        \begin{axis}[
            width  = 6cm,
            height = 4cm,
            xmin = 0,  xmax = 4.8175,
            ymin = 0,  ymax = 3,
            xlabel = {Input amplitude $|x|$},
            ylabel = {Output amplitude $|g(x)|$},
            tick style={major tick length=4pt, minor tick length=0pt},
            grid=major,
            legend style={
                at={(1.05,0.5)},
                anchor=west,
                cells={align=left},
                legend columns=1,
            },
        ]
        
        \addplot[
            thick,
            dotted,
            color = black,
            line width = 1.5pt,
        ] table[x=x, y=yl, col sep=space] {data/general/PA.dat};
        \addlegendentry{Linear}

        \addplot[
            thick,
            dashed,
            color = red!70!black,
            line width = 1.5pt,
        ] table[x=x, y=ynl, col sep=space] {data/general/PA.dat};
        \addlegendentry{Poly}

        \addplot[
            thick,
            solid,
            color = blue!70!black,
            mark = x,
            mark repeat = 15,
        ] table[x=x, y=ynls, col sep=space] {data/general/PA.dat};
        \addlegendentry{Sat-Poly}
        
        \end{axis}
    \end{tikzpicture}
    \caption{Amplitude input-output characteristics of the PA for the linear, third-degree polynomial (Poly), and saturated third-degree polynomial (Sat-Poly) models. 
   }
    \label{fig:PA_behaviour}
\end{figure}

\section{Simulation Results}
The system consists of $K=10$ source nodes uniformly and randomly distributed in an annulus with radii  $1\,$m and $100\,$m, with the server at the center. The channel is subject to Rayleigh fading, so that all $h_k \sim \CN{0}{d_k^{-4}}$, where $d_k$ is the link distance of node $k$. 
We use a third-order polynomial model ($L=1$) with $\alpha_1=1$ and  $\alpha_3 \in \{0, \, 0.0479\,\mathrm{e}^{-j\,2.816}\}$, where $\alpha_3 =0$ corresponds to the linear case and $\alpha_3=0.0479\,\mathrm{e}^{-j\,2.816}$ is an example of the nonlinear case, see \cite{faulkner2002spectral}.
Polynomial models typically provide an accurate representation of PA behavior over the input-amplitude range used for parameter fitting, see, e.g., \cite{faulkner2002spectral}. However, at large input amplitudes, where practical PAs exhibit saturation, the model may no longer accurately characterize the PA behavior, as illustrated in Fig.~\ref{fig:PA_behaviour}.
To better reflect practical PA behavior, we additionally incorporate a saturation region in our simulations. Motivated by the PA response shown in Fig.~\ref{fig:PA_behaviour}, we enforce hard clipping for input amplitudes $|x_k| > x_\mathrm{sat} = 3$, and evaluate the resulting system performance under this more realistic Sat-Poly model. That is, the system is simulated with 
\begin{align} 
\label{eq:PA_PS}
g(x_k)= g_\mathrm{SP}(x_k) \triangleq
    \begin{cases}
    g_\mathrm{P}(x_k), & \text{if } \abs{x_k} \leq x_\mathrm{sat}, \\
    g_\mathrm{P}\!\left(x_\mathrm{sat}\dfrac{x_k}{\abs{x_k}}\right),  & \text{if } \abs{x_k} > x_\mathrm{sat},
    \end{cases}
\end{align}
and we report $e_\mathrm{SP}(\delta,\eta,h_\mathrm{th})$ whenever nonlinearity is considered.
All results are averaged over $10^2$ independent realizations of the network and channel, and $10^2$ realizations of the source data and noise.

\subsection{Effect of Power Amplifier Nonlinearity}
\label{sec:res_impact}

In this subsection, we evaluate the impact of PA nonlinearity on the system performance under the processing schemes introduced in Sections~\ref{sec:proc_unbiased} and~\ref{sec:proc_partial}.

In Fig.~\ref{fig:res1}, we plot $e_\mathrm{L}(\delta,\eta,h_{\mathrm{th},0})$ and $e_\mathrm{SP}(\delta,\eta,h_{\mathrm{th},0})$ for the linear and nonlinear PA models, respectively. 
Hence, we quantify the gap between the system performance that one may naively expect under ideal linear PA assumptions and the realistic performance that is obtained under non-ideal PA. The parameter $\eta$ for the different power control schemes is set to its respective upper bound, $\eta = \sqrt{P}\min_{k\in \mathcal{K}_h}  \abs{h_k}$, and the partial channel inversion scheme is evaluated for $\delta=0$ and $\delta=1$. The system is simulated for all combinations of $\alpha_3 \in \{0, 0.0479\,\mathrm{e}^{-j\,2.816}\}$ and $\sigma^2 \in \{0,10^{-8}\}$, corresponding to linear and nonlinear PAs, as well as noiseless and noisy communication settings, respectively. For the partial channel inversion schemes, the channel quality threshold is set to $h_{\mathrm{th},0}=10^{-4}$.

Figs.~\ref{fig:res1_nonlinear_noiseless} and \ref{fig:res1_linear_noiseless} show the noiseless case. By isolating the effect of PA nonlinearity, Fig.~\ref{fig:res1_nonlinear_noiseless} shows that the MSE increases with $P$, as the nonlinear distortion becomes more pronounced at higher transmit powers.
Figs.~\ref{fig:res1_nonlinear_noisy} and \ref{fig:res1_linear_noisy} present the noisy case. In the presence of nonlinearity (Fig.~\ref{fig:res1_nonlinear_noisy}), the MSE exhibits a minimum at a finite value of $P$. 
For small $P$, the noise term $\gamma^2 \sigma^2$ in \eqref{eq:analytical_MSE} dominates, causing the MSE to decrease as $P$ increases. Beyond this minimum, however, nonlinear distortion becomes dominant, and further increases in $P$ degrade system performance.

In both cases, the relative performance of the different power control schemes changes when PA nonlinearity is present. Although full channel inversion achieves the best performance at high input power levels in these cases, the MSE can no longer be reduced arbitrarily when nonlinearity is present. Furthermore, the preferable power scaling strategy for source nodes with insufficient channel quality (either remaining silent or transmitting at maximum power) depends on the input power limit. 
For large $P$, transmitting at the maximum power introduces sufficient distortion to outweigh the performance gains from effective noise reduction. This motivates the intermediate regime $0<\delta<1$ considered in Remark~\ref{remark:general_processing_scheme}, which is explored in the following subsection.

\begin{figure}[!t]
    \centering
    \begin{subfigure}[b]{0.49\linewidth}
        \centering
        \begin{tikzpicture}
            \begin{axis}[
                xlabel={$P$},
                ylabel={MSE},
                tick style={major tick length=4pt, minor tick length=0pt},
                legend style={
                    at={(1.5,1.05)},
                    anchor=south,
                    cells={align=left},
                    legend columns=3,
                    clip=false,
                },
                grid=major,
                width=\linewidth,
                height=3.8cm,
                xmode=log,
                ymode=log,
                every axis plot/.append style={line width=1.5pt},
            ]
                \addplot[color=black, solid,      thick]
                    table [x=P, y=MSE, col sep=space] {data/res1/new/nonlinear_noisy__full_inversion.dat};
                \addlegendentry{Full inversion}
                \addplot[color=red!70!black,   dashed,     thick]
                    table [x=P, y=MSE, col sep=space] {data/res1/new/nonlinear_noisy__truncated_silent.dat};
                \addlegendentry{$\delta=0$}
                \addplot[color=blue!70!black,  dashdotted, thick]
                    table [x=P, y=MSE, col sep=space] {data/res1/new/nonlinear_noisy__truncated_max.dat};
                \addlegendentry{$\delta=1$}
            \end{axis}
        \end{tikzpicture}
        \caption{$(\alpha_3, \sigma^2) =$  \\ \null \hspace{0.45cm}$(0.0479\,\mathrm{e}^{-j\,2.816},10^{-8})$}
        \label{fig:res1_nonlinear_noisy}
    \end{subfigure}
    \begin{subfigure}[b]{0.49\linewidth}
        \centering
        \begin{tikzpicture}
            \begin{axis}[
                xlabel={$P$},
                ylabel={},
                tick style={major tick length=4pt, minor tick length=0pt},
                grid=major,
                width=\linewidth,
                height=3.8cm,
                xmode=log,
                ymode=log,
                ymin = 1e-1,
                every axis plot/.append style={line width=1.5pt},
            ]
                \addplot[color=black, solid,      thick]
                    table [x=P, y=MSE, col sep=space] {data/res1/new/nonlinear_noiseless__full_inversion.dat};
                \addplot[color=red!70!black,   dashed,     thick]
                    table [x=P, y=MSE, col sep=space] {data/res1/new/nonlinear_noiseless__truncated_silent.dat};
                \addplot[color=blue!70!black,  dashdotted, thick]
                    table [x=P, y=MSE, col sep=space] {data/res1/new/nonlinear_noiseless__truncated_max.dat};
            \end{axis}
        \end{tikzpicture}
        \caption{$(\alpha_3, \sigma^2) =$ \\ \null \hspace{0.45cm}$(0.0479\,\mathrm{e}^{-j\,2.816},0)$}
        \label{fig:res1_nonlinear_noiseless}
    \end{subfigure}

    \vspace{0.4em}

    \begin{subfigure}[b]{0.49\linewidth}
        \centering
        \begin{tikzpicture}
            \begin{axis}[
                xlabel={$P$},
                ylabel={MSE},
                tick style={major tick length=4pt, minor tick length=0pt},
                grid=major,
                width=\linewidth,
                height=3.8cm,
                xmode=log,
                ymode=log,
                every axis plot/.append style={line width=1.5pt},
            ]
                \addplot[color=black, solid,      thick]
                    table [x=P, y=MSE, col sep=space] {data/res1/new/linear_noisy__full_inversion.dat};
                \addplot[color=red!70!black,   dashed,     thick]
                    table [x=P, y=MSE, col sep=space] {data/res1/new/linear_noisy__truncated_silent.dat};
                \addplot[color=blue!70!black,  dashdotted, thick]
                    table [x=P, y=MSE, col sep=space] {data/res1/new/linear_noisy__truncated_max.dat};
            \end{axis}
        \end{tikzpicture}
        \caption{$(\alpha_3, \sigma^2) = (0,10^{-8})$}
        \label{fig:res1_linear_noisy}
    \end{subfigure}
    \begin{subfigure}[b]{0.49\linewidth}
        \centering
        \begin{tikzpicture}
            \begin{axis}[
                xlabel={$P$},
                ylabel={},
                tick style={major tick length=4pt, minor tick length=0pt},
                grid=major,
                width=\linewidth,
                height=3.8cm,
                xmode=log,
                ymode=log,
                ymax = 100,
                every axis plot/.append style={line width=1.5pt},
            ]
                \addplot[color=black, solid,      thick]
                    table [x=P, y=MSE, col sep=space] {data/res1/new/linear_noiseless__full_inversion.dat};
                \addplot[color=red!70!black,   dashed,     thick]
                    table [x=P, y=MSE, col sep=space] {data/res1/new/linear_noiseless__truncated_silent.dat};
                \addplot[color=blue!70!black,  dashdotted, thick]
                    table [x=P, y=MSE, col sep=space] {data/res1/new/linear_noiseless__truncated_max.dat};
            \end{axis}
        \end{tikzpicture}
        \caption{$(\alpha_3, \sigma^2) = (0,0)$}
        \label{fig:res1_linear_noiseless}
    \end{subfigure}

    \caption{MSE of different processing schemes with and without PA nonlinearity ($\alpha_3$), and with and without channel noise ($\sigma^2$).}
    \label{fig:res1}
\end{figure}

\subsection{Nonlinearity-Aware Power Control}
\label{sec:res:power_control}
Fig.~\ref{fig:res} shows the MSE of the proposed nonlinearity-aware power control scheme, given by the solution to \eqref{eq:opt_nonlinear}, two baseline schemes from the linear AirComp literature, and a genie benchmark. The linear baselines comprise the optimal power control scheme from \cite{caoOptimized} (Lin-Opt), given by the solution to
\begin{align}
\label{eq:opt_linear}
\min_{\delta \in [0,1],\,\eta \geq 0,\,h_{\mathrm{th}}} \quad & e_\mathrm{L}(\delta,\eta,h_{\mathrm{th}}) \\
\text{s.t.} \quad & \eta \leq \sqrt{P}\min_{k \in \mathcal{K}_h}|h_k|,
\nonumber
\end{align}
and truncated channel inversion with $\delta=0$ (Truncated), as in, e.g., \cite{broadband}.
The genie accounts for PA saturation by empirically approximating the solution to 
\begin{align}
\label{eq:opt_genie}
\min_{\delta \in [0,1],\,\eta \geq 0,\,h_{\mathrm{th}}}
\quad &
e_\mathrm{SP}(\delta,\eta,h_\mathrm{th}) \\
\text{s.t.} \quad &
\eta \leq \sqrt{P}\min_{k \in \mathcal{K}_h} |h_k|.
\nonumber
\end{align}
For each channel realization, the MSE under the PA model in \eqref{eq:PA_PS} is estimated over a batch of source realizations for all $K+1$ candidate $h_\mathrm{th}$ intervals and all $(\eta,\delta)$ pairs on a $25\times25$ logarithmically spaced grid. The power control setting yielding the lowest empirical MSE is selected and fixed over the entire batch. The additive noise level is set to $\sigma^2=10^{-8}$.

We emphasize that all curves report the true system performance under nonlinearity, i.e., $e_\mathrm{SP}(\delta,\eta,h_\mathrm{th})$, evaluated at the values of $\delta$, $\eta$, and $h_\mathrm{th}$ obtained from the respective optimization problems above.

For small $P$, Lin-Opt and the nonlinearity-aware scheme achieve essentially identical performance, as the PA operates predominantly in its linear region. As $P$ increases, however, the two schemes increasingly diverge. The MSE of Lin-Opt increases due to the growing nonlinear distortion, whereas the proposed scheme adapts its power control parameters to the nonlinear PA and converges to a finite MSE. This demonstrates that the power control parameters that are optimal under the linear PA model need not remain suitable when nonlinear distortion becomes significant.
Notably, Lin-Opt also becomes inferior to the simpler Truncated scheme at sufficiently large $P$. This illustrates that power control optimized under the assumption of a linear PA can become detrimental when nonlinear distortion dominates. In contrast, the proposed scheme explicitly accounts for the nonlinear PA characteristic when selecting the power control parameters, allowing it to avoid the continued increase in nonlinear distortion observed for the linear-PA baseline.

The proposed scheme exhibits a similar behavior to the genie as $P$ increases, with the MSE decreasing at low $P$ and converging to a finite value at high $P$. Although a performance gap to the genie remains across the considered range, it is relatively small.
At low $P$, where PA saturation is negligible, the gap primarily results from the genie optimizing the power control parameters using the actual source realizations in the batch, whereas the proposed scheme optimizes the MSE based on the statistical source and PA models. At high $P$, the additional knowledge of the actual PA saturation behavior further contributes to the genie's advantage.

Nevertheless, the proposed scheme captures the key behavior of the genie, namely the stabilization of the MSE at high $P$, while requiring only statistical knowledge of the source distribution and PA model.
This is particularly notable because the proposed optimization does not explicitly account for the saturation region used in the simulations, yet still avoids the unbounded performance degradation observed for power control designed under the linear PA assumption.

\begin{figure}[!t]
    \centering
    \begin{tikzpicture}
        \begin{axis}[
            xlabel={$P$},
            ylabel={MSE},
            tick style={major tick length=4pt, minor tick length=0pt},
            legend style={
                at={(1.1,0.5)},
                anchor=west,
                cells={align=left},
                legend columns=1,
            },
            grid=major,
            width=5.5cm,
            height=5cm,
            xmode=log,
            xmax=100,
            ymax=3.2,
            xmin = 0.1,
            every axis plot/.append style={line width=1.3pt},
        ]
            \addplot[color=black, solid]
                table [x=P, y=MSE, col sep=space]
                {data/res/analytical.dat};
            \addlegendentry{Nonlinearity-aware}
            
            \addplot[color=red!70!black, dashed]
                table [x=P, y=MSE, col sep=space]
                {data/res/linear_opt.dat};
            \addlegendentry{Lin-Opt \cite{caoOptimized}}

            \addplot[color=blue!70!black, densely dotted]
                table [x=P, y=MSE, col sep=space]
                {data/res/baseline.dat};
            \addlegendentry{Truncated}

            \addplot[color=green!60!black, dashdotted]
                table [x=P, y=MSE, col sep=space]
                {data/res/joint.dat};
            \addlegendentry{Genie}
        \end{axis}
    \end{tikzpicture}
    \caption{MSE of the proposed nonlinearity-aware power control scheme, the optimal power control scheme from \cite{caoOptimized} (Lin-Opt), a truncated channel inversion scheme with $\delta=0$ (Truncated), and a saturation-aware genie. 
    }
    \label{fig:res}
\end{figure}

\section{Conclusion}

This paper analyzes the impact of PA nonlinearity on coherent AirComp. Assuming circularly symmetric complex Gaussian source data, we derive a closed-form expression for the receiver MSE under a memoryless polynomial PA model. Numerical results show that power control designed for linear PAs, although optimal under the ideal linear model, can result in rapidly increasing MSE as the transmit power budget increases when PA nonlinearity is present.
In contrast, incorporating PA nonlinearity into the power control design mitigates this degradation and substantially improves performance at high transmit power. The proposed scheme also exhibits the same qualitative high-power behavior as a genie scheme that has knowledge of the actual PA saturation and instantaneous source data, despite relying only on statistical knowledge of the source distribution and polynomial PA model.
These results highlight the importance of accounting for PA nonlinearity when designing power control for practical coherent AirComp systems.

\appendices
\section{Proof of Theorem~1}
\label{app:proof_th1}

Recall that $\vect{s} \sim \CN{\vect{0}}{\mathbf{I}_K}$ and $\tilde{w} \sim \CN{0}{\gamma^2 \sigma^2}$ are independent. Let $s_k = r_k e^{j\theta_k}$ and define $v_{k,l} = s_k |s_k|^{2l}$. We first state the following lemma.

\begin{lemma}
\label{lemma:statistics_general}
$\EX{\tilde{w}v_{k,l}} = 0$, $\EX{v_{k,l}\conj v_{k^\prime,l^\prime}} = 0 \quad \forall k \neq k^\prime$, and $\EX{v_{k,l}\conj v_{k,l^\prime}} = (l+l^\prime+1)!.$
\end{lemma}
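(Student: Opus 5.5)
The plan is to verify the three identities of Lemma~\ref{lemma:statistics_general} directly from the independence structure and the polar representation $s_k = r_k e^{j\theta_k}$. For $s_k \sim \CN{0}{1}$, the modulus $r_k$ and phase $\theta_k$ are independent, with $\theta_k$ uniform on $[0,2\pi)$ and $r_k^2$ exponentially distributed with unit mean. Moreover, the $s_k$ are mutually independent across $k$ and independent of $\tilde{w}$. In this notation $v_{k,l} = r_k^{2l+1} e^{j\theta_k}$.

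\emph{First identity.} Since $\tilde{w}$ is independent of $s_k$, we have $\EX{\tilde{w}v_{k,l}} = \EX{\tilde{w}}\,\EX{v_{k,l}}$. Because $\EX{\tilde{w}}=0$, the product vanishes. (Alternatively, $\EX{v_{k,l}} = \EX{r_k^{2l+1}}\,\EX{e^{j\theta_k}} = 0$ by the uniform phase.)

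\emph{Second identity.} For $k\neq k'$, independence of $s_k$ and $s_{k'}$ gives $\EX{v_{k,l}\conj v_{k',l'}} = \EX{v_{k,l}}\,\EX{v_{k',l'}}\conj$. Each factor is zero by the same phase argument, provided $\EX{r_k^{2l+1}}$ is finite. This holds because all moments of a Rayleigh variable exist.

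\emph{Third identity.} For equal indices the phases cancel: $v_{k,l}\conj v_{k,l'} = r_k^{2(l+l')+2} = (r_k^2)^{l+l'+1}$. With $r_k^2 \sim \mathrm{Exp}(1)$, the $n$-th moment is $\int_0^\infty t^n e^{-t}\,dt = n!$. Setting $n = l+l'+1$ yields $(l+l'+1)!$.

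The only step needing care is justifying the polar decomposition. The joint density $\frac{1}{\pi}e^{-|s|^2}$ becomes $\frac{1}{\pi} r e^{-r^2}\,dr\,d\theta$, which factorizes into a uniform phase and a Rayleigh modulus. The substitution $t=r^2$ then gives the exponential law. Everything else is routine, so I expect no real obstacle.

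For completeness, here is how the lemma yields Theorem~\ref{thm:MSE}. Write $f-\hat f = \sum_k \sum_l d_{k,l} v_{k,l} - \tilde{w}$. Expand $\abs{f-\hat f}^2$ and take expectations term by term. The cross terms with $\tilde{w}$ vanish by the first identity, and the cross-node terms vanish by the second. The noise contributes $\EX{\abs{\tilde w}^2} = \gamma^2\sigma^2$. The same-node terms give $\sum_{l,l'} d_{k,l}\conj d_{k,l'}\,(l+l'+1)!$ by the third identity, which is \eqref{eq:analytical_MSE}.
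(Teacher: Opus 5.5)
Your proposal is correct and follows essentially the same argument as the paper. Both use the polar form $s_k=r_ke^{j\theta_k}$, with uniform phase independent of the modulus, to reduce the first two expectations to $\mathbb{E}[e^{j\theta_k}]=0$ via independence, and both reduce the third to the $(l+l'+1)$-th moment of $r_k^2\sim\mathrm{Exp}(1)$. Your explicit density factorization and your remark that the Rayleigh moments are finite are small additions of rigor that the paper leaves implicit.
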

\begin{proof}
    Since $s_k \sim \CN{0}{1}$ is circularly symmetric, we may write $s_k=r_ke^{j\theta_k}$, where $\theta_k\sim\mathrm{Uniform}(0,2\pi)$ is independent of $r_k$. Hence, $v_{k,l}=r_k^{2l+1}e^{j\theta_k}$ and $\mathbb{E}[e^{j\theta_k}]=0$ implies $\mathbb{E}[s_k]=\mathbb{E}[v_{k,l}]=0$. Further, $\mathbb{E}[\tilde{w}v_{k,l}]=0$ since $\tilde{w}$ is zero mean and independent of $v_{k,l}$, and for $k\neq k^\prime$,
    $\mathbb{E}[v_{k,l}\conj v_{k^\prime,l^\prime}]
    =\mathbb{E}[v_{k,l}]\mathbb{E}[\conj v_{k^\prime,l^\prime}]=0$
    by independence. Finally,
    $v_{k,l}\conj v_{k,l^\prime}=|s_k|^{2(l+l^\prime+1)}=r_k^{2(l+l^\prime+1)}$, and since $r_k^2\sim\mathrm{Exp}(1)$, whose $n$-th moment is $n! \,$, it follows that
    $\mathbb{E}[v_{k,l}\conj v_{k,l^\prime}]=(l+l^\prime+1)!$.
\end{proof}
Note that $f - \hat{f} = \sum_{k \in \mathcal{K}} \sum_{l=0}^{L} \left(d_{k,l} v_{k,l} \right) - \tilde{w}$, where $d_{k,0}=1-c_{k,0}$ and $d_{k,l}=-c_{k,l}, \, \, l \ge 1$.
Using Lemma~\ref{lemma:statistics_general}, the MSE can be computed as
\begin{align}
    \EX[w, \vect{s}]{\left\vert f-\hat f \right\vert^2}
    =
    \EX[w]{\abs{\tilde{w}}^2} + \EX[\vect{s}]{\abs{\sum_{k \in \mathcal{K}} \sum_{l=0}^{L} d_{k,l} v_{k,l}}^2},
\end{align}
where
\begin{align}
    \EX[\vect{s}]{\abs{\sum_{k \in \mathcal{K}}\sum_{l=0}^{L} d_{k,l} v_{k,l}}^2}
    &= \sum_{k,k^\prime \in \mathcal{K}}\sum_{l=0}^{L} \sum_{l^\prime=0}^{L} d_{k,l}\conj d_{k^\prime,l^\prime} \EX{v_{k,l}\conj v_{k^\prime,l^\prime}}
    \nonumber\\
    &= \sum_{k \in \mathcal{K}} \sum_{l=0}^{L} \sum_{l^\prime=0}^{L} (l+l^\prime+1)! \,d_{k,l}\conj d_{k,l^\prime}.
\end{align}
Since $\EX[w]{\abs{\tilde{w}}^2} = \gamma^2 \sigma^2$, substituting completes the proof.

\bibliographystyle{IEEEtran}
\bibliography{references}

\end{document}